\documentclass[11pt, reqno]{amsart}
\usepackage{amsmath, amsthm, amscd, amsfonts, amssymb, graphicx, color}
\usepackage{mathrsfs}
\usepackage{enumitem}
\usepackage[bookmarksnumbered, colorlinks, plainpages]{hyperref}
\hypersetup{colorlinks=true, linkcolor=red, anchorcolor=red,
citecolor=red, urlcolor=red, filecolor=magenta, pdftoolbar=true}

\newtheorem{theorem}{Theorem}[section]
\newtheorem{lemma}[theorem]{Lemma}

\newtheorem{corollary}[theorem]{Corollary}
\theoremstyle{definition}

\newtheorem{example}[theorem]{Example}

\theoremstyle{remark}
\newtheorem{remark}[theorem]{Remark}
\numberwithin{equation}{section}

\newcommand{\OSR}{ \mathrm{Sch}}
\newcommand{\ve}{ \mathrm{vec}}
\newcommand{\s}{ \mathrm{span}}
\newcommand{\DS}{ \mathrm{dist}}
\newcommand{\SR}{ \mathrm{Sch}}
\newcommand{\RAN}{ \mathrm{ran}}
\newcommand{\rank}{ \mathrm{rank}}
\begin{document}

\title[Operator Schmidt Rank and the Geometry of Schmidt Profiles]{Operator Schmidt Rank and the Geometry of Schmidt Profiles}

\author[M. Kian, M.S. Moslehian ]{Mohsen Kian \MakeLowercase{and} Mohammad Sal Moslehian$^*$}

\address{Mohsen Kian: Department of Mathematics, University of Bojnord, P.O. Box
	1339, Bojnord 94531, Iran.}
\email{kian@ub.ac.ir; kian$_-$tak@yahoo.com}

\address{Mohammad Sal Moslehian: Department of Pure Mathematics, Faculty of Mathematical Sciences, Ferdowsi University of Mashhad, P. O. Box 1159, Mashhad 91775, Iran}
\email{moslehian@um.ac.ir; msmoslehian@gmail.com}

\subjclass[2020]{Primary 46L07, 15A18; Secondary 47A63, 81P45.}

\keywords{Schmidt coefficients; operator Schmidt rank; bipartite unitaries; Schatten norms; R\'{e}nyi entropy.}

\thanks{$^*$Corresponding author}

\begin{abstract}
	The main result of the paper states that if
$0\neq v\in\mathbb C^m\otimes\mathbb C^n$, $d=\min\{m,n\}$, and
$p=\SR(v)$, then for every $1\le q\le d$,
\begin{equation*}
	\min\left\{\OSR(\mathbf U):
	\begin{array}{l}
		\mathbf U\in\mathbb M_m\otimes\mathbb M_n\text{ is unitary},\\
		\SR(\mathbf Uv)=q
	\end{array}\right\}
	=
	\left\lceil\max\left\{\frac pq,\frac qp\right\}\right\rceil .
\end{equation*}
Here $\OSR(\mathbf U)$ denotes the operator Schmidt rank of $\mathbf U$.
Furthermore, if $\mathbf U\in\mathbb M_m\otimes\mathbb M_n$ is unitary and
$\OSR(\mathbf U)\le k$, then, for every unit vector
$x\in\mathbb C^m\otimes\mathbb C^n$ and every $1\le r\le d$,
\begin{equation*}
	F_{\min\{kr,d\}}(\mathbf Ux)\ge F_r(x)
	\quad\text{and}\quad
	F_{\min\{kr,d\}}(x)\ge F_r(\mathbf Ux),
\end{equation*}
where $F_r(x)$ denotes the sum of the $r$ largest squared Schmidt
coefficients of $x$. This result is stronger than rank reachability as it
controls the entire cumulative Schmidt profile. We also establish a
majorization relation and use it to obtain sharp bounds on Schatten-norm
distortion and on the change of R\'enyi entropies under unitaries with
bounded operator Schmidt rank.
\end{abstract}

\maketitle
%----------------------------------------------------------%

\section{Introduction}
Tensor-product spaces enjoy several natural notions of rank for states and operators.
These ranks measure the complexity of states and operators relative to the underlying tensor-product structure. In quantum information, they are closely related to entanglement and to the nonlocal structure of quantum operations.

In the bipartite setting, pure states are characterized by their Schmidt coefficients, while operators admit an operator Schmidt decomposition.

For bipartite pure states, Schmidt coefficients play a central role in characterizing state transformations under restricted classes of operations.
Nielsen \cite{Nielsen1999} showed that a deterministic transformation $|\psi\rangle\to|\phi\rangle$ by local operations and classical communication is possible if and only if $\lambda(\psi)\prec\lambda(\phi)$, where $\lambda(\psi)$ and $\lambda(\phi)$ are the vectors of squared Schmidt coefficients.
Vidal \cite{Vidal2000} extended this result to probabilistic transformations and expressed the optimal conversion probability in terms of monotones defined by tail sums of the squared Schmidt coefficients.

There have been several works on the operator Schmidt rank of bipartite unitaries, mainly focused on the possible values of the rank and on the structure of unitaries with small operator Schmidt rank. Tyson \cite{Tyson2003} constructs bipartite unitaries with prescribed operator Schmidt ranks using Fourier methods. In particular, every rank from $1$ to $9$ occurs on $\mathbb C^3\otimes\mathbb C^3$. Cohen and Yu \cite{CohenYu2013} proved that every bipartite unitary of operator Schmidt rank two is locally equivalent to a controlled unitary. Chen and Yu \cite{ChenYu2014} studied the structure of Schmidt-rank-three unitaries, including controlled and block-controlled forms. M\"{u}ller-Hermes and Nechita \cite{MHN2018} later studied the general existence problem. They proved that, apart from the exceptional two-qubit case, all values $1,\ldots,\min\{n,m\}^2$
occur as operator Schmidt ranks of bipartite unitaries. Recent works have sought finer ways of measuring Schmidt numbers and $k$-positivity. For example, in \cite{Kian2026}, it is introduced a continuous refinement of the usual discrete Schmidt-number and $k$-positivity hierarchies.

Stahlke and Griffiths \cite{StGr2011} considered the deterministic implementation of bipartite unitaries by separable operations, including LOCC, using a bipartite pure entangled resource state. They showed that the Schmidt rank of the resource cannot be smaller than the operator Schmidt rank of the unitary, and that equality can occur only when the resource is maximally entangled on its Schmidt support.
 Cariello \cite{CAR} investigated how Schmidt rank constraints interact with other quantum properties, focusing on mutually unbiased bases and positive-partial-transpose entanglement. Abellanet-Vidal et al. \cite{ABE} provided a framework that bounds the entanglement of a bipartite quantum state from its spectrum alone.

Nielsen et al. \cite{Nielsen2003} investigated the nonlocal content of bipartite unitary dynamics, including operator Schmidt decompositions and Schmidt strength. Our entropy results are closely related in spirit to the idea that operator Schmidt rank limits entanglement generation. In addition, Bennett et al. \cite{BenHarLeSm2003} considered interactions as bidirectional channels and studied  the capacities for interaction Hamiltonians and nonlocal unitary gates to generate entanglement.  Chen and Yu \cite{ChenYu20162} studied the entangling power   of bipartite unitaries and show that   any bipartite permutation unitary of Schmidt rank three can be implemented using local operations and classical communications (LOCC) and two ebits.

\medskip

Throughout this paper, we denote the space of $m\times n$ complex matrices by $\mathbb M_{m,n}$ and use $\mathbb M_m$ for $\mathbb M_{m,m}$. We identify
$\mathbb M_m\otimes\mathbb M_n$ with the algebra of linear operators on
$\mathbb C^m\otimes\mathbb C^n$ under the canonical tensor-product representation $(A\otimes B)(u\otimes v):=A(u)\otimes B(v)\,\, A\in \mathbb M_m, B\in \mathbb M_n, u \in \mathbb C^m, v\in \mathbb C^n$. For a matrix $X\in\mathbb M_{m,n}$, we denote its singular values arranged in decreasing order and counted according to multiplicity as
$$
s_1(X)\geq s_2(X)\geq\cdots\geq s_d(X)\geq0,
$$
where, and throughout the paper, $d=\min\{m,n\}$. For $1\leq p<\infty$, the Schatten $p$-norm of $X$ is defined by
$\|X\|_p=\left(\sum_{j=1}^{d}s_j(X)^p\right)^{1/p}$, and the operator norm is given as $\|X\|_\infty=s_1(X)$. In particular, $\|X\|_2$ stands for the Hilbert--Schmidt norm.

Fix orthonormal bases $\{e_i\}_{i=1}^{m}$ of $\mathbb C^m$ and
$\{f_j\}_{j=1}^{n}$ of $\mathbb C^n$. We use the vectorization map
$$
\ve:\mathbb M_{m,n}\longrightarrow \mathbb C^m\otimes\mathbb C^n
$$
defined by
\begin{align*}
\ve(X)=\sum_{i=1}^{m}\sum_{j=1}^{n}
x_{ij}\,e_i\otimes f_j,\qquad X=(x_{ij})\in\mathbb M_{m,n}.
\end{align*}
In fact, the matrix unit $E_{ij}$, with a $1$ in position $(i,j)$ and $0$ elsewhere, corresponds with $e_i\otimes f_j$. This map is a linear isometry from $\mathbb M_{m,n}$, equipped with the
Hilbert--Schmidt norm, onto
$\mathbb C^m\otimes\mathbb C^n$, so that  $\|\ve(X)\|=\|X\|_2$.

For a vector $x\in \mathbb C^m\otimes\mathbb C^n$, a Schmidt decomposition of $x$ is a representation
$$
x=\sum_{i=1}^{r}s_i\,u_i\otimes v_i,
$$
where $r\leq \min\{m,n\}$, the numbers $s_i>0$ are called \emph{Schmidt coefficients}, and $\{u_i\}_{i=1}^{r}\subset\mathbb C^m$, $\{v_i\}_{i=1}^{r}\subset\mathbb C^n$ are orthonormal families. Such a representation does exist; see \cite[Section 2.5]{NIE}. The minimum possible number of terms $r$ in such a decomposition is called the \emph{Schmidt rank} of $x$ and is denoted by $\SR(x)$.
In other words, if $x=\ve(X)$ for some $X\in\mathbb M_{m,n}$, then $s_i$'s are the singular values of $X$, and $\SR(x)=\rank(X)$.

For a unit vector (bipartite pure state) $x$, let
$$
\lambda_1(x)\ge\lambda_2(x)\ge\cdots\ge\lambda_d(x)\ge0
$$
denote the squared singular values, arranged in decreasing order and counted with multiplicity, extended with zeros if necessary.
Thus
\begin{align*}
\sum_{j=1}^{d}\lambda_j(x)=1\quad\text{and}\quad\SR(x)
=\bigl|\{j:\lambda_j(x)>0\}\bigr|.
\end{align*}

For
$\mathbf A\in\mathbb M_m\otimes\mathbb M_n$, the
\emph{operator Schmidt rank} of $\mathbf A$ is
$$
\OSR(\mathbf A):=\min\left\{r:\mathbf A=\sum_{i=1}^{r}A_i\otimes B_i,\
A_i\in\mathbb M_m,\ B_i\in\mathbb M_n\right\},
$$
with the convention $\OSR(0):=0$. Clearly, $\OSR(\mathbf A^*)=\OSR(\mathbf A)$, and for each $x\in\mathbb C^m\otimes\mathbb C^n$ \begin{align}\label{eq-SR-OSR-act}
	\SR(\mathbf A x)
	\le \OSR(\mathbf A)\SR(x);
\end{align}
see \cite[Secs.~9.2 and~10.2]{BenZyc}.

In this paper, as the central result, we determine the possible Schmidt ranks of $Ux$, when $U$ has bounded operator Schmidt rank, and we obtain sharp bounds for its Schmidt spectrum. We characterize exactly which Schmidt ranks are attainable and determine the minimum operator Schmidt rank required to change the Schmidt rank from $p$ to $q$ (Theorem \ref{th-sch-reach}).

Moreover, we obtain sharp bounds on the partial sums of the squared Schmidt coefficients (Theorem \ref{th-Sch-Prof}). More precisely, we establish that if $F_r(x)$ denotes the sum of the $r$ largest squared Schmidt coefficients of $x$, then $\OSR(\mathbf U)\le k$ implies
\begin{align}\label{ine}
	F_{\min\{kr,d\}}(\mathbf Ux)\ge F_r(x)\quad\text{and}\quad F_{\min\{kr,d\}}(x)\ge F_r(\mathbf Ux).
\end{align}
We present an example of two vectors such that both have Schmidt rank $d$, while an exact conversion between them may require a unitary with large operator Schmidt rank because their Schmidt spectra are very different.
Inequalities \eqref{ine} also yield sharp bounds on Schatten-norm distortion and on the change of R\'{e}nyi entropies under unitaries of bounded operator Schmidt rank (Theorem \ref{th-Ren-OSR}), which gives a uniform sharp bound for all R\'{e}nyi parameters from the Schmidt-profile inequalities.

The notation and terminology employed in this paper are standard and largely adhere to the conventions set forth in \cite{BhMatAnal}. The Schmidt rank and operator Schmidt rank are understood in the usual tensor-rank sense; see, for example, \cite{BenZyc}.

%%%%=====================================================================

\section{Schmidt rank under operator Schmidt rank constraints}

In this section we establish our main results. We characterize precisely which Schmidt ranks are attainable and determine the minimum operator Schmidt rank needed to change the Schmidt rank from $p$ to $q$. For the method of partitioning the indices of the Schmidt coefficients and the approach of defining unitaries on orthogonal subspaces readers may consult \cite{MHN2018}.

\begin{theorem}\label{th-sch-reach}
Let $0\neq v\in\mathbb C^m\otimes\mathbb C^n$ and $p:=\SR(v)$. For any $1\le q\le d$ \begin{align}\label{eq-ex-osr-cos}
\min\left\{\OSR(\mathbf U):
\begin{array}{l}
\mathbf U\in\mathbb M_m\otimes\mathbb M_n\text{ unitary},\\
\SR(\mathbf Uv)=q\end{array}\right\}
=\left\lceil\max\left\{\frac{p}{q},\frac{q}{p}\right\}\right\rceil.
\end{align}
\end{theorem}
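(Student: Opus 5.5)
Write $k:=\lceil\max\{p/q,q/p\}\rceil$. The proof splits into a lower bound, valid for every admissible unitary, and an explicit construction of a unitary that attains $k$.

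\emph{Lower bound.} Let $\mathbf U$ be unitary with $\SR(\mathbf Uv)=q$. Applying \eqref{eq-SR-OSR-act} to $\mathbf U$ and $v$ gives $q\le \OSR(\mathbf U)\,p$. Applying it to $\mathbf U^*$ and the vector $\mathbf Uv$, and using $\OSR(\mathbf U^*)=\OSR(\mathbf U)$, gives $p\le \OSR(\mathbf U)\,q$. Since $\OSR(\mathbf U)$ is an integer, these two inequalities together force $\OSR(\mathbf U)\ge k$.

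\emph{Construction.} We may assume $v$ is a unit vector, and we may conjugate by local unitaries $V\otimes W$, since this changes neither the Schmidt rank of a vector nor the operator Schmidt rank of an operator. So we take $v=\sum_{i=1}^p s_i\,e_i\otimes f_i$. Note $p,q\le d$.

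\emph{Case $q\ge p$.} Then $k=\lceil q/p\rceil$. Partition $\{1,\dots,q\}$ into $p$ nonempty blocks $B_1,\dots,B_p$, each of size at most $k$. Define unit vectors
\[
w_i=|B_i|^{-1/2}\sum_{j\in B_i} e_j\otimes f_j, \qquad i=1,\dots,p.
\]
Then $\sum_i s_i w_i$ has Schmidt rank $q$. We want a unitary that maps $e_i\otimes f_i\mapsto w_i$ and has operator Schmidt rank $\le k$. Use a controlled structure, in the style of \cite{MHN2018}. Let $\mathbf U=\sum_{i} P_i\otimes R_i$, where the $P_i$ are orthogonal projections on $\mathbb C^m$ summing to $I$, with $P_i e_i=e_i$, and each $R_i$ is a unitary on $\mathbb C^n$. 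This gives only product images, so instead we control on the first factor and let the second factor create entanglement. The cleaner route is to realise $\mathbf U=\sum_{t=1}^{k} A_t\otimes B_t$ through a "shift" construction. Enumerate $B_i=\{b_{i,1},\dots,b_{i,\ell_i}\}$ with $\ell_i\le k$. Apply a Fourier-type unitary on an auxiliary index in the second factor that spreads $f_i$ over $k$ slots, and then apply a controlled relabelling. Concretely, one seeks $\mathbf U=\sum_{t=1}^k X_t\otimes Y_t$, where $X_t$ is a partial permutation taking $e_i\mapsto e_{b_{i,t}}$ and $Y_t$ takes $f_i\mapsto |B_i|^{-1/2} f_{b_{i,t}}$. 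This is built as a sum of $k$ product terms by construction.

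\emph{Main obstacle: unitarity.} The step I expect to be hardest is ensuring that this map extends to a unitary on the whole space while keeping exactly $k$ product terms. The map is an isometry on $\mathrm{span}\{e_i\otimes f_i\}$, but a generic extension would raise the operator Schmidt rank. My first attempt is to rewrite $\mathbf U$ as a block-controlled unitary. Write $\mathbb C^m=\bigoplus_t E_t$ using a suitable grouping of indices, and make $\mathbf U=\sum_{t=1}^k Q_t\otimes V_t$, where the $Q_t$ are orthogonal projections summing to $I$ and the $V_t$ are unitaries. Such an operator is automatically unitary, and its operator Schmidt rank is at most $k$. To produce the needed entanglement, first apply a local unitary on the first factor that maps $e_i$ to a uniform superposition over $k$ orthogonal vectors, one in each $E_t$. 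Then let $V_t$ send $f_i$ to $f_{b_{i,t}}$. Composing with local unitaries does not change the operator Schmidt rank, so the rank stays at most $k$. The dimension bookkeeping, namely fitting $p$ indices times $k$ slots inside $m$ and $n$, has to be handled using $q\le d$; for blocks with $\ell_i<k$, repeated targets are absorbed by choosing the superposition weights accordingly.

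\emph{Case $q<p$.} Then $k=\lceil p/q\rceil$. Apply the previous case with the roles of $p$ and $q$ exchanged: there is a unitary $\mathbf W$ with $\OSR(\mathbf W)\le k$ mapping a suitable rank-$q$ vector $u$ to $v$ up to local unitaries. Take $\mathbf U=\mathbf W^*$, which has the same operator Schmidt rank. This requires freedom to choose the target Schmidt coefficients, and the construction above provides it, because the block weights can be tuned to reproduce the prescribed coefficients $s_i$.
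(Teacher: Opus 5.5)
Your lower bound is the paper's. The construction you finally settle on is also the paper's: a local unitary $L\otimes I_n$ that spreads each $e_i$ over orthogonal vectors lying in distinct blocks $E_t$, followed by a controlled unitary $\sum_t Q_t\otimes V_t$. This is exactly the paper's Case~2 unitary $\widetilde{\mathbf W}^{*}=\mathbf W^{*}(L^{*}\otimes I_n)$ read forwards, with your $|B_i|$ playing the role of the paper's $\nu_a$. You never establish that the shift operator $\sum_t X_t\otimes Y_t$ is unitary, so drop it.

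\textbf{First repair: slot count.} The spreading must be over exactly $|B_i|$ slots, not $k$. As literally written, with a uniform superposition over $k$ slots for every $i$, the construction can fail. Take $m=n=3$, $p=2$, $q=3$, $k=2$. One of $E_1,E_2$ is a line. Orthogonality of $Le_1,Le_2$ then forces all four slot vectors into a plane, so $\SR(\mathbf Uv)\le 2$. Your remark about blocks with $\ell_i<k$ points the right way, but the fix is to drop the unused slots altogether.

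Concretely, write $B_i=\{b_{i,1},\dots,b_{i,\ell_i}\}$ and set:
\begin{itemize}
\item $E_t=\mathrm{span}\{e_{b_{i,t}}:\ell_i\ge t\}$, with $Q_1$ also absorbing the orthogonal complement;
\item $Le_i=\sum_{t\le \ell_i}\omega_{i,t}\,e_{b_{i,t}}$, where $\sum_t|\omega_{i,t}|^2=1$;
\item $V_tf_i=f_{b_{i,t}}$ whenever $\ell_i\ge t$.
\end{itemize}
This uses only $q\le d$ basis vectors in each factor. Each $V_t$ maps an orthonormal family onto an orthonormal family, so it extends to a unitary. At most $\max_i\ell_i\le k$ product terms occur. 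Finally,
$\mathbf U(e_i\otimes f_i)=\sum_t\omega_{i,t}\,e_{b_{i,t}}\otimes f_{b_{i,t}}$, which is your $w_i$ when $\omega_{i,t}=\ell_i^{-1/2}$.

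\textbf{Second repair: the weights for $q<p$.} You must make the tuning explicit, and it is easy:
\begin{itemize}
\item partition $\{1,\dots,p\}$ into $q$ blocks of size at most $\lceil p/q\rceil$;
\item set $c_i^2=\sum_{j\in B_i}s_j^2$ and $w=\sum_{i=1}^q c_i\,e_i\otimes f_i$;
\item take $\omega_{i,t}=s_{b_{i,t}}/c_i$.
\end{itemize}
The construction above then gives $\mathbf Ww=\sum_{j=1}^p s_j\,e_j\otimes f_j=v$ exactly, with no further local correction. So $\mathbf U:=\mathbf W^{*}$ satisfies $\mathbf Uv=w$ and $\SR(w)=q$.

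With these repairs your proof is correct. It is the mirror image of the paper's. The paper takes the collapsing unitary $\sum_r P_r\otimes V_r$ as its basic move, sending the $f_j$ of each block $\Lambda_r$ onto a common set $g_1,\dots,g_q$, and obtains spreading as an adjoint. You take spreading as basic and obtain collapsing as an adjoint. The paper's choice handles $q\le p$ with no coefficient matching at all. Yours pays for using spreading as the basic move with the (trivial) weight bookkeeping above.
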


\begin{proof}
Set $t=\left\lceil\max\left\{\frac{p}{q},\frac{q}{p}\right\}\right\rceil$.
Let $\mathbf U$ be unitary and $\SR(\mathbf Uv)=q$. By \eqref{eq-SR-OSR-act} we have
$$
q=\SR(\mathbf Uv)\le\OSR(\mathbf U)\SR(v)
=p \OSR(\mathbf U)
$$
and
$$
p=\SR(v)=\SR(\mathbf{U}^*\mathbf Uv)\le \OSR(\mathbf U^*)\SR(\mathbf Uv)=q\OSR(\mathbf U).
$$
Hence
\begin{align}\label{eq-fU}
\OSR(\mathbf U)\ge\max\left\{\frac{p}{q},\frac{q}{p}\right\},
\end{align}
and therefore $\OSR(\mathbf U)\ge t$.

We next construct a unitary of operator Schmidt rank at most $t$. We consider two cases:\\
\noindent
\emph{Case 1: $q\le p$.}
In this case $t=\left\lceil\frac{p}{q}\right\rceil$.
Choose a Schmidt decomposition for $v$ as
\begin{align}\label{eq-sch-v}
v=\sum_{j=1}^{p}\alpha_j e_j\otimes f_j,\qquad\alpha_j>0,
\end{align}
where
$\{e_j\}_{j=1}^{p}\subseteq\mathbb C^m$ and
$\{f_j\}_{j=1}^{p}\subseteq\mathbb C^n$
are orthonormal families. Since $t=\lceil p/q\rceil$, partition $\{1,\ldots,p\}$ into $t$
nonempty, pairwise disjoint sets
$$
\{1,\ldots,p\}=\Lambda_1\sqcup\cdots\sqcup\Lambda_t
$$
such that $|\Lambda_r|\le q$ for all $1\le r\le t$, and $|\Lambda_1|=q$. Such a partition exists since $p\le tq$ and $p>(t-1)q$ whenever $t>1$.

For each $r$, write
$$
\Lambda_r=\{j_{r,1},\ldots,j_{r,s_r}\},\qquad s_r:=|\Lambda_r|\le q.
$$

Fix orthonormal vectors $g_1,\ldots,g_q\in\mathbb C^n$. For each $r$, the families
$$
f_{j_{r,1}},\ldots,f_{j_{r,s_r}}\quad\text{and}\quad g_1,\ldots,g_{s_r}
$$
are orthonormal and have the same cardinality. Hence, there exists a
unitary $V_r\in\mathbb M_n$ such that
$$
V_r f_{j_{r,a}}=g_a, \qquad 1\le a\le s_r.
$$

We now decompose the first tensor factor according to the same
partition. For $1\le r\le t$, set
$$
E_r:=\s\{e_j:j\in\Lambda_r\}\quad\text{and}\quad E_0:=\s\{e_1,\ldots,e_p\}^{\perp}.
$$
Then
$$
\mathbb C^m = (E_0\oplus E_1)\oplus E_2\oplus\cdots\oplus E_t.
$$
Let $P_1$ be the orthogonal projection onto $E_0\oplus E_1$, and,
for $2\le r\le t$, let $P_r$ be the orthogonal projection onto $E_r$.
Thus, $P_1,\ldots,P_t$ are mutually orthogonal projections satisfying
$$
\sum_{r=1}^t P_r=I_m\quad\text{and}\quad P_r e_j=e_j\quad(1\le r\le t,\,j\in\Lambda_r).
$$

Put
\begin{align}
\mathbf U:= \sum_{r=1}^{t}P_r\otimes V_r,
\end{align}
so that $\mathbf U^*\mathbf U = \sum_{r=1}^{t}P_r\otimes I_n=\mathbf I$, and similarly
$\mathbf U\mathbf U^*=\mathbf I$.
Hence, $\mathbf U$ is unitary, while $\OSR(\mathbf U)\le t$. This operator is a classic controlled unitary, which have used in quantum computing; see, for example, \cite{CohenYu2013}. From  \eqref{eq-sch-v} and the fact that the sets $\Lambda_1,\ldots,\Lambda_t$ form a
partition of $\{1,\ldots,p\}$, we derive that
\begin{align*}
\mathbf Uv
&=\left(\sum_{r=1}^{t}P_r\otimes V_r\right)
\left(\sum_{j=1}^{p}\alpha_j e_j\otimes f_j\right)\\
&=\sum_{r=1}^{t}\sum_{j\in\Lambda_r}
\alpha_j\,e_j\otimes V_r f_j\\
&=\sum_{r=1}^{t}\sum_{a=1}^{s_r}
\alpha_{j_{r,a}}\,e_{j_{r,a}}\otimes V_r f_{j_{r,a}}\\
&=\sum_{r=1}^{t}\sum_{a=1}^{s_r}\alpha_{j_{r,a}}\,
e_{j_{r,a}}\otimes g_a.
\end{align*}
Since $s_r\le q$ for every $r$, we may now collect together all
terms having the same second tensor factor $g_a$. Thus
\begin{align}\label{eq-sch-Uv}
\mathbf Uv=\sum_{a=1}^{q}\left(\sum_{\substack{1\le r\le t\\ a\le s_r}}
\alpha_{j_{r,a}}e_{j_{r,a}}\right)\otimes g_a=\sum_{a=1}^{q}x_a\otimes g_a,
\end{align}
where
$$
x_a:=\sum_{\substack{1\le r\le t\\ a\le s_r}}
\alpha_{j_{r,a}}e_{j_{r,a}},\qquad1\le a\le q.
$$
For distinct $a,b\in\{1,\ldots,q\}$, the vectors $x_a$ and $x_b$
have disjoint supports with respect to the orthonormal family
$\{e_j\}_{j=1}^{p}$. Indeed, since the sets
$\Lambda_1,\ldots,\Lambda_t$ are pairwise disjoint and the indices
$j_{r,1},\ldots,j_{r,s_r}$ are distinct within each $\Lambda_r$, no
basis vector $e_j$ can occur in both $x_a$ and $x_b$. Hence
$$
\langle x_a,x_b\rangle=0 \qquad(a\neq b).
$$
Moreover, $s_1=|\Lambda_1|=q$, so for every $1\le a\le q$ the vector
$x_a$ contains the term $\alpha_{j_{1,a}}e_{j_{1,a}}$, whose coefficient is nonzero. Therefore,  $x_a\neq0$ for every $a$. Since $g_1,\ldots,g_q$ are orthonormal and
$x_1,\ldots,x_q$ are nonzero and mutually orthogonal, the expansion in \eqref{eq-sch-Uv}
is a Schmidt decomposition up to normalization of the vectors
$x_a$. Therefore,  $\SR(\mathbf Uv)=q$.

\medskip

\noindent
\emph{Case 2: $p\le q$.}
In this case $t=\left\lceil\frac{q}{p}\right\rceil$. Write
$$
q=(t-1)p+s,\qquad 1\le s\le p.
$$
Consider the Schmidt decomposition \eqref{eq-sch-v} of $v$. Define subsets $J_1,\ldots,J_t\subseteq\{1,\ldots,p\}$ by
$$
J_t=\{1,\ldots,s\}\quad\text{and}\quad J_r=\{1,\ldots,p\},\qquad 1\le r\le t-1.
$$
When $t=1$, the first family is empty and $J_1=\{1,\ldots,p\}$, so that $s=p=q$. Also, each $J_r$ is nonempty. Thus, we always have
$$
\sum_{r=1}^{t}|J_r|=(t-1)p+s=q.
$$

For each $1\le a\le p$, define
$$
\nu_a:=\bigl|\{r\in\{1,\ldots,t\}:a\in J_r\}\bigr|.
$$
 Since $J_1=\{1,\ldots,p\}$, we have $\nu_a\ge1$ for every $1\le a\le p$. Now let
$$
\Omega:=\{(r,a):1\le r\le t,\ a\in J_r\}.
$$
For each fixed $r$, there are exactly $|J_r|$ pairs in $\Omega$
whose first coordinate is $r$. Hence, $|\Omega|=\sum_{r=1}^{t}|J_r|=q$. Therefore, $\Omega$ contains exactly $q$ ordered pairs.
Counting the same pairs according to their second coordinate gives $\sum_{a=1}^{p}\nu_a=|\Omega|=q$.

Since $\Omega$ contains exactly $q$ elements and since $q\le d=\min\{m,n\}$, we can choose $q$ orthonormal vectors in each of $\mathbb C^m$ and $\mathbb C^n$, indexed by the pairs in $\Omega$.
Thus, choose families
$$
\{u_{r,a}:(r,a)\in\Omega\}\subseteq\mathbb C^m\quad\text{and}\quad \{h_{r,a}:(r,a)\in\Omega\}
\subseteq\mathbb C^n
$$
such that
$$
\langle u_{r,a},u_{s,b}\rangle = \delta_{r,s}\delta_{a,b}
\quad\text{and}\quad \langle h_{r,a},h_{s,b}\rangle
= \delta_{r,s}\delta_{a,b}
$$
for all $(r,a),(s,b)\in\Omega$. For each $(r,a)\in\Omega$, we put
$\beta_{r,a}:=\frac{\alpha_a}{\sqrt{\nu_a}}$. This is well defined since $\nu_a\ge1$. Moreover, $\beta_{r,a}\neq0$, since every Schmidt coefficient
$\alpha_a$ in \eqref{eq-sch-v} is nonzero.

Now define
\begin{align}\label{eq-exp-vec}
w:=\sum_{(r,a)\in\Omega}\beta_{r,a}\,
u_{r,a}\otimes h_{r,a}=\sum_{r=1}^{t}\sum_{a\in J_r}
\beta_{r,a}\,u_{r,a}\otimes h_{r,a}.
\end{align}
Since both indexed families
$$
\{u_{r,a}:(r,a)\in\Omega\}\quad\text{and}\quad\{h_{r,a}:(r,a)\in\Omega\}
$$
are orthonormal, and all $q$ coefficients
$\beta_{r,a}$ are nonzero, \eqref{eq-exp-vec} is a Schmidt decomposition of $w$ with exactly $q$ nonzero terms. Thus, $\SR(w)=q$.
Moreover,
\begin{align*}
\|w\|^2=\sum_{(r,a)\in\Omega}|\beta_{r,a}|^2
=\sum_{r=1}^{t}\sum_{a\in J_r}\frac{|\alpha_a|^2}{\nu_a}
= \sum_{a=1}^{p}\nu_a\frac{|\alpha_a|^2}{\nu_a}
=\sum_{a=1}^{p}|\alpha_a|^2= \|v\|^2.
\end{align*}

For each $r\in\{1,\ldots,t\}$, the vectors $\{h_{r,a}:a\in J_r\}$ and $\{f_a:a\in J_r\}$ are orthonormal families with same cardinality. Hence, there exists a unitary $V_r\in\mathbb M_n$ such that $V_r h_{r,a}=f_a$, for any $a\in J_r$. For each $r$, let
$$
E_r:=\s\{u_{r,a}:a\in J_r\}\subseteq\mathbb C^m.
$$
Since the family $\{u_{r,a}:(r,a)\in\Omega\}$ is orthonormal, the subspaces $E_1,\ldots,E_t$ are mutually orthogonal. Therefore,  we may choose mutually orthogonal projections $P_1,\ldots,P_t\in\mathbb M_m$ such that
$$
P_r u_{r,a}=u_{r,a}\quad(a\in J_r)\quad\text{and}\quad\sum_{r=1}^{t}P_r=I_m.
$$
Define
$$
\mathbf W:=\sum_{r=1}^{t}P_r\otimes V_r
$$
so that $\mathbf W$ is unitary. Moreover, the above decomposition
contains at most $t$ elementary tensor terms, and hence $\OSR(\mathbf W)\le t$.
Employing \eqref{eq-exp-vec}, we get
\begin{align*}
\mathbf Ww&=\left(\sum_{r=1}^{t}P_r\otimes V_r\right)
\left(\sum_{(s,a)\in\Omega}\beta_{s,a}\,u_{s,a}\otimes h_{s,a}\right)\\
&=\sum_{r=1}^{t}\sum_{(s,a)\in\Omega}\beta_{s,a}\,P_r u_{s,a}\otimes V_r h_{s,a}.
\end{align*}
Since $u_{s,a}\in E_s\subseteq\RAN(P_s)$ and the projections $P_1,\ldots,P_t$ are mutually orthogonal, we have

$$
P_r u_{s,a}=\begin{cases}u_{s,a},& r=s,\\
0,& r\neq s.
\end{cases}
$$
Therefore
\begin{align*}
\mathbf Ww &=\sum_{s=1}^{t}\sum_{a\in J_s}
\beta_{s,a}\,u_{s,a}\otimes V_s h_{s,a}\\
&=\sum_{s=1}^{t}\sum_{a\in J_s}\beta_{s,a}\,u_{s,a}\otimes f_a\\
&=\sum_{a=1}^{p}\left(\sum_{\substack{1\le r\le t\\ a\in J_r}}
\beta_{r,a}u_{r,a}\right)\otimes f_a\\
&=\sum_{a=1}^{p}x_a\otimes f_a,
\end{align*}
where
$$
x_a:=\sum_{\substack{1\le r\le t\\ a\in J_r}}\beta_{r,a}u_{r,a},
\qquad1\le a\le p.
$$

In addition, $x_1,\ldots,x_p$ are pairwise orthogonal, since for $a\neq b$, we have
\begin{align}\label{eq-in-p}
\langle x_a,x_b\rangle&=\sum_{\substack{1\le r\le t\\ a\in J_r}}
\sum_{\substack{1\le s\le t\\ b\in J_s}}\overline{\beta_{r,a}}\beta_{s,b}\langle u_{r,a},u_{s,b}\rangle =0,
\end{align}
since the family $\{u_{r,a}:(r,a)\in\Omega\}$ is orthonormal. Furthermore, for each $a\in\{1,\ldots,p\}$,
\begin{align}\label{eq-nor}
\|x_a\|^2=\sum_{\substack{1\le r\le t\\ a\in J_r}}
|\beta_{r,a}|^2=\sum_{\substack{1\le r\le t\\ a\in J_r}}
\frac{|\alpha_a|^2}{\nu_a}
=\nu_a\frac{|\alpha_a|^2}{\nu_a}
=|\alpha_a|^2.
\end{align}
In particular, each $x_a$ is nonzero. Since the Schmidt coefficients in \eqref{eq-sch-v} satisfy
$\alpha_a>0$, the relations \eqref{eq-in-p} and \eqref{eq-nor} show that
$$
\left\{\frac{x_a}{\alpha_a}:1\le a\le p\right\}
$$
is an orthonormal family in $\mathbb C^m$. Since $\{e_1,\ldots,e_p\}$ is also an orthonormal family in $\mathbb C^m$, there exists a unitary $L\in\mathbb M_m$ such that $L\frac{x_a}{\alpha_a}=e_a$ for any $1\le a\le p$. We have
\begin{align*}
(L\otimes I_n)\mathbf Ww=\sum_{a=1}^{p}Lx_a\otimes f_a=\sum_{a=1}^{p}\alpha_a e_a\otimes f_a=v,
\end{align*}
where the last equality follows from the Schmidt decomposition
\eqref{eq-sch-v}.

Now consider
$$
\widetilde{\mathbf W}:=(L\otimes I_n)\mathbf W=\sum_{r=1}^{t}LP_r\otimes V_r,
$$
 which is a sum of at most $t$ elementary tensor terms, and therefore
$\OSR(\widetilde{\mathbf W})\le t$. Moreover, $\widetilde{\mathbf W}$ is unitary since both
$L\otimes I_n$ and $\mathbf W$ are unitary. Finally, we set
$\mathbf U:=\widetilde{\mathbf W}^{\,*}$ so that $\mathbf U$ is unitary and $\mathbf Uv=w$. Thus,
$$
\SR(\mathbf Uv)=\SR(w)=q.
$$
Therefore, in both cases $p\ge q$ and $p\le q$ we showed that
there exists a unitary $\mathbf U$ satisfying $\SR(\mathbf Uv)=q$ and $
\OSR(\mathbf U)\le t$, where $t=\left\lceil\max\left\{\frac{p}{q},
\frac{q}{p}\right\}\right\rceil$.
On the other hand, \eqref{eq-fU} shows that every unitary $\mathbf U$ satisfying
$\SR(\mathbf Uv)=q$ satisfies $\OSR(\mathbf U)\ge t$.
Hence, the constructed unitary necessarily satisfies $\OSR(\mathbf U)=t$, and therefore
$$
\min\left\{\OSR(\mathbf U):\mathbf U\ \text{unitary and }\
\SR(\mathbf Uv)=q\right\}=\left\lceil\max\left\{\frac{p}{q},\frac{q}{p}
\right\}\right\rceil.
$$
\end{proof}

%%---------------------------------------------------------------------------------
Theorem~\ref{th-sch-reach} determines the Schmidt ranks attainable under a prescribed bound on the operator Schmidt rank.

\begin{corollary}\label{cor-reach-ranks}
Let $0\neq v\in\mathbb C^m\otimes\mathbb C^n$ and $p:=\SR(v)$, and let $k\in\mathbb{N}$. Then
$$
\left\{\SR(\mathbf Uv):\mathbf U\ \text{unitary},\
\OSR(\mathbf U)\le k\right\}
=\left\{q\in\{1,\ldots,d\}:\left\lceil\frac{p}{k}\right\rceil\le q\le
\min\{kp,d\}\right\}.
$$
\end{corollary}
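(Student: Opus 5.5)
We prove the two inclusions separately, relying on \eqref{eq-SR-OSR-act} and Theorem~\ref{th-sch-reach}.

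\emph{The left-hand set is contained in the right-hand set.} Let $\mathbf U$ be unitary with $\OSR(\mathbf U)\le k$, and put $q:=\SR(\mathbf Uv)$. Since $\mathbf U$ is invertible, $\mathbf Uv\neq0$, so $q\ge1$. Also $q\le d$, because every Schmidt rank is at most $d$. Applying \eqref{eq-SR-OSR-act} to $\mathbf U$ and to $v$ gives
$$
q\le \OSR(\mathbf U)\,p\le kp,
\qquad\text{so}\qquad q\le\min\{kp,d\}.
$$
Applying \eqref{eq-SR-OSR-act} instead to $\mathbf U^*$ and to $\mathbf Uv$, and using $\OSR(\mathbf U^*)=\OSR(\mathbf U)$, gives
$$
p=\SR(\mathbf U^*\mathbf Uv)\le kq .
$$
Hence $q\ge p/k$, and since $q$ is an integer, $q\ge\lceil p/k\rceil$.

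\emph{The right-hand set is contained in the left-hand set.} Let $q\in\{1,\ldots,d\}$ satisfy $\lceil p/k\rceil\le q\le\min\{kp,d\}$. By Theorem~\ref{th-sch-reach} there is a unitary $\mathbf U$ with $\SR(\mathbf Uv)=q$ and
$$
\OSR(\mathbf U)=t:=\left\lceil\max\left\{\frac pq,\frac qp\right\}\right\rceil .
$$
It remains to show $t\le k$. Since $k$ is an integer, this is equivalent to showing both $p/q\le k$ and $q/p\le k$.
\begin{itemize}
\item From $q\ge\lceil p/k\rceil\ge p/k$ we get $p\le kq$, that is, $p/q\le k$.
\item From $q\le kp$ we get $q/p\le k$.
\end{itemize}
Therefore $q$ lies in the left-hand set.

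\emph{Remark.} No step is genuinely hard. The only points needing care are converting the real inequality $p/k\le q$ into the integer bound $\lceil p/k\rceil\le q$, which holds because $q$ is an integer, and converting $\max\{p/q,q/p\}\le k$ into $t\le k$, which holds because $k$ is an integer.
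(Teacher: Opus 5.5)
Your proof is correct and follows essentially the paper's argument. The paper also reduces everything to Theorem~\ref{th-sch-reach}, observing that reachability of $q$ with $\OSR(\mathbf U)\le k$ is equivalent, since $k$ is an integer, to $p\le kq$ and $q\le kp$; your only change is to derive the necessity direction directly from \eqref{eq-SR-OSR-act}, which is exactly how that theorem's lower bound is proved.
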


\begin{proof}
By Theorem~\ref{th-sch-reach}, a target rank $q$ is reachable with
operator Schmidt rank at most $k$ if and only if
$$
\left\lceil\max\left\{\frac{p}{q},\frac{q}{p}\right\}\right\rceil\le k.
$$
Since $k$ is an integer, this is equivalent to $p\le kq$ and $q\le kp$. For integers $1\le q\le d$, these inequalities are precisely
$$
\left\lceil\frac{p}{k}\right\rceil\le q\le \min\{kp,d\}.
$$
\end{proof}
%%%---------------------------------------------------

Corollary~\ref{cor-reach-ranks} also identifies the smallest and largest Schmidt ranks attainable under a fixed operator Schmidt rank bound.

\begin{corollary}
Let $ v\in\mathbb C^m\otimes\mathbb C^n$ be a non-zero vector with $p:=\SR(v)$ and let $k\in\mathbb{N}$. Then
\begin{align}\label{eq-min-reach}
\min_{\substack{\mathbf U\ \mathrm{unitary}\\
\OSR(\mathbf U)\le k}}\SR(\mathbf Uv)
&=\left\lceil\frac{p}{k}\right\rceil,
\end{align}
\begin{align}\label{eq-max-reach}
\max_{\substack{\mathbf U\ \mathrm{unitary}\\
\OSR(\mathbf U)\le k}}\SR(\mathbf Uv)
&=\min\{kp,d\}.
\end{align}

Thus, for every $1\le r<d$, it holds that $\SR(\mathbf Uv)\le r$ for every unitary $\mathbf U$
with $\OSR(\mathbf U)\le k$ if and only if $\SR(v)\le \left\lfloor\frac{r}{k}\right\rfloor$.
\end{corollary}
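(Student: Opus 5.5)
The two extremal formulas follow from Corollary~\ref{cor-reach-ranks}, which identifies the set $\{\SR(\mathbf Uv):\mathbf U \text{ unitary},\ \OSR(\mathbf U)\le k\}$ with the integer interval $\{q\in\{1,\ldots,d\}:\lceil p/k\rceil\le q\le\min\{kp,d\}\}$. This set is nonempty because $q=p$ belongs to it: $\lceil p/k\rceil\le p\le\min\{kp,d\}$, using $p\le d$. Its minimum and maximum are therefore the endpoints of the interval. We also need $\lceil p/k\rceil\ge 1$, which holds since $p\ge1$, and $\min\{kp,d\}\le d$. Taking the minimum and maximum of this interval gives \eqref{eq-min-reach} and \eqref{eq-max-reach} directly.

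For the final equivalence, fix $1\le r<d$. The statement ``$\SR(\mathbf Uv)\le r$ for every unitary $\mathbf U$ with $\OSR(\mathbf U)\le k$'' is equivalent to the maximum in \eqref{eq-max-reach} being at most $r$, that is, $\min\{kp,d\}\le r$. Since $r<d$, we cannot have $d\le r$, so this reduces to $kp\le r$. Because $p$ is an integer, $kp\le r$ is in turn equivalent to $p\le\lfloor r/k\rfloor$.

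No step is a real obstacle. The only care needed is in the last reduction, where the hypothesis $r<d$ is used to discard the branch $\min\{kp,d\}=d$. If $r=d$, that branch would make the condition hold trivially, which is why $r$ is restricted.
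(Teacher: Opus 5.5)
Your proof is correct and follows the paper's argument exactly. You read off the endpoints of the integer interval from Corollary~\ref{cor-reach-ranks}, then use $r<d$ to reduce $\min\{kp,d\}\le r$ to $kp\le r$, i.e.\ $p\le\lfloor r/k\rfloor$; your explicit check that the interval is nonempty (it contains $q=p$) is a small point the paper leaves implicit.
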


\begin{proof}
By Corollary~\ref{cor-reach-ranks}, the set of Schmidt ranks attainable
from $v$ by unitaries of operator Schmidt rank at most $k$ is
exactly
$$
\left\{q\in\{1,\ldots,d\}:\left\lceil\frac{p}{k}\right\rceil
\le q\le\min\{kp,d\}\right\}.
$$
Its smallest and largest elements are therefore $\left\lceil\frac{p}{k}\right\rceil$ and $\min\{kp,d\}$, which gives us
\eqref{eq-min-reach} and \eqref{eq-max-reach}.

Now fix $1\le r<d$. The condition $\SR(\mathbf Uv)\le r$ for every unitary $\mathbf U$ with $\OSR(\mathbf U)\le k$ is equivalent, by \eqref{eq-max-reach}, to $\min\{kp,d\}\le r$. Since $r<d$, this holds if and only if $kp\le r$, or equivalently, $p\le\left\lfloor\frac{r}{k}\right\rfloor$.
\end{proof}

%%%---------------------------------------------------------------------------

\begin{remark}
Theorem~\ref{th-sch-reach} shows that the optimal operator Schmidt rank cost of changing
Schmidt rank from $p$ to $q$ depends only on the integers $p$ and
$q$, and not on the Schmidt coefficients of the initial vector. Thus,
the rank-reachability problem depends only on discrete rank information.

A bound on operator Schmidt rank, however, does not capture every constraint that arises. We therefore shift our attention to the Schmidt coefficients.

\end{remark}

%%%-------------------------------------------------------------------------

Let $x\in\mathbb C^m\otimes\mathbb C^n$ be a unit vector, $d=\min\{m,n\}$, and $\lambda_1(x)\ge\cdots\ge\lambda_d(x)\ge0$ denote its squared Schmidt coefficients. For $1\le r\le d$, set
$$
F_r(x):=\sum_{j=1}^r\lambda_j(x).
$$
We refer to $(F_1(x),\dots,F_d(x))$ as the \emph{$r$th cumulative Schmidt profile}. Thus, $F_r(x)^{1/2}$ is the vector $s(r)$-norm
\cite{Johkri2010}. In particular, $F_r(x)=1$ if and only if $\SR(x)\le r$.
%%%-------------------------------------------------------------------------

\medskip
The next theorem shows how a bound on operator Schmidt rank affects the partial sums of the squared Schmidt coefficients.
\begin{theorem}\label{th-Sch-Prof}
Let $\mathbf U\in\mathbb M_m\otimes\mathbb M_n$ be unitary with
$\OSR(\mathbf U)\le k$ for $k\in\mathbb{N}$. Then, for every unit vector
$x\in\mathbb C^m\otimes\mathbb C^n$ and every $1\le r\le d$,
$$
F_s(\mathbf Ux)\ge F_r(x)\quad\text{and}\quad F_s(x)\ge F_r(\mathbf Ux),
$$
where $s:=\min\{kr,d\}$. Moreover, the index $s$ is optimal, in the sense that whenever $s>1$, there exist such
$x$ and $\mathbf U$ for which
$$
F_{s-1}(\mathbf Ux)<F_r(x).
$$
\end{theorem}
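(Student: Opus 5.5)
The plan rests on the variational characterization of the Ky Fan partial sums:
\[
F_r(x)=\max\{|\langle x,y\rangle|^2: \|y\|=1,\ \SR(y)\le r\}.
\]
This is the Eckart--Young / Ky Fan fact. Writing $x=\ve(X)$, we have $\max_{\|Y\|_2=1,\ \rank Y\le r}|\operatorname{tr}(Y^*X)|^2=\sum_{j\le r}s_j(X)^2$; the maximum is attained by truncating the singular value decomposition, and the upper bound follows from von Neumann's trace inequality and Cauchy--Schwarz. I would state this as a short lemma first.

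For the first inequality, fix a unit $y$ with $\SR(y)\le r$ attaining $F_r(x)=|\langle x,y\rangle|^2$. Then $\mathbf Uy$ is a unit vector, and by \eqref{eq-SR-OSR-act} it satisfies $\SR(\mathbf Uy)\le kr$. Since also $\SR(\mathbf Uy)\le d$, we get $\SR(\mathbf Uy)\le s$. Unitarity gives $\langle \mathbf Ux,\mathbf Uy\rangle=\langle x,y\rangle$. Applying the variational formula to $\mathbf Ux$ at level $s$ then yields $F_s(\mathbf Ux)\ge|\langle x,y\rangle|^2=F_r(x)$.

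The second inequality is the same argument with $\mathbf U$ replaced by $\mathbf U^*$ and $x$ replaced by $\mathbf Ux$. This works because $\OSR(\mathbf U^*)=\OSR(\mathbf U)\le k$ and $\mathbf U^*\mathbf Ux=x$.

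For optimality, assume $s>1$. If $s=d<kr$, choose a unit $x$ with $\SR(x)\le r$ and apply Theorem~\ref{th-sch-reach}, in the form of Corollary~\ref{cor-reach-ranks}, to obtain $\mathbf U$ with $\OSR(\mathbf U)\le k$ and $\SR(\mathbf Ux)=d$ (take $\SR(x)=r$; then $d\le kr$ is reachable). We then have $F_{d-1}(\mathbf Ux)<1=F_r(x)$. If $s=kr\le d$, take $x$ with $\SR(x)=r$ and apply the Case~2 construction with $p=r$ and $q=kr$, so that $t=k$. This produces $\mathbf U$ with $\OSR(\mathbf U)\le k$ and $\SR(\mathbf Ux)=kr=s$. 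Hence $F_{s-1}(\mathbf Ux)<1=F_r(x)$, since $\SR(x)\le r$ gives $F_r(x)=1$.

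In short, optimality reduces to rank reachability, which is already established. The main step needing care is the variational lemma, specifically its upper bound. For that I would write $|\operatorname{tr}(Y^*X)|\le\sum_{j\le r}s_j(Y)s_j(X)\le(\sum_{j\le r} s_j(Y)^2)^{1/2}(\sum_{j\le r}s_j(X)^2)^{1/2}$, using $\rank Y\le r$ and $\|Y\|_2=1$.
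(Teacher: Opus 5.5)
Your proposal is correct and is essentially the paper's argument. The paper phrases the variational step as the Eckart--Young--Mirsky identity $\mathrm{dist}(y,\Sigma_j)^2=1-F_j(y)$, where $\Sigma_j$ is the set of vectors of Schmidt rank at most $j$; since $\Sigma_j$ is a cone, this is equivalent to your maximal-overlap formula. It then uses the same inclusion $\mathbf U\Sigma_r\subseteq\Sigma_s$, unitary invariance, passage to $\mathbf U^*$, and optimality via Theorem~\ref{th-sch-reach} with $p=r$, $q=s$, which makes your case split unnecessary but harmless.
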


\begin{proof}
Set $s:=\min\{kr,d\}$, and for $1\le j\le d$ let
$$
\Sigma_j:=\{z\in\mathbb C^m\otimes\mathbb C^n:\SR(z)\le j\}.
$$

For a  unit vector $y$,  with its coefficient matrix
$Y$, elements of $\Sigma_j$ correspond exactly to
matrices of rank at most $j$. The singular values of $Y$
are the Schmidt coefficients of $y$, so that $\sigma_i(Y)^2=\lambda_i(y)$. The Eckart--Young--Mirsky theorem for the truncated singular value decomposition (see, for example, \cite{GolVanLoa}) ensures that the
best approximation of $Y$ by a matrix of rank at most $j$ is obtained
by keeping the first $j$ singular values. In other words, the best
approximation of $y$ by a vector of Schmidt rank at most $j$ is
obtained by truncating its Schmidt decomposition. Hence,
$$
\DS(y,\Sigma_j)^2=\sum_{i=j+1}^{d}\sigma_i(Y)^2
=\sum_{i>j}\lambda_i(y)=1-F_j(y).
$$
Hence, by \eqref{eq-SR-OSR-act}, $\mathbf U\Sigma_r\subseteq\Sigma_s$. Therefore
\begin{align*}
1-F_s(\mathbf Ux)
&=\DS(\mathbf Ux,\Sigma_s)^2\\
&\le \DS(\mathbf Ux,\mathbf U\Sigma_r)^2 \\
&=\DS(x,\Sigma_r)^2\qquad(\mbox{since $\mathbf U$ is unitary})\\
&=1-F_r(x).
\end{align*}
Thus, $F_s(\mathbf Ux)\ge F_r(x)$. Since
$\OSR(\mathbf U^*)=\OSR(\mathbf U)\le k$,
applying the same argument to $\mathbf U^*$ gives $F_s(x)\ge F_r(\mathbf Ux)$.

To prove optimality, if $s=1$, there is nothing to show.
Suppose $s\ge2$, and choose a unit vector $x$ with
$\SR(x)=r$. By Theorem~\ref{th-sch-reach}, there exists
a unitary $\mathbf U$ such that
$$
\SR(\mathbf Ux)=s,\qquad
\OSR(\mathbf U)=\left\lceil\frac{s}{r}\right\rceil\le k.
$$
Therefore,  $F_r(x)=1$ and $F_{s-1}(\mathbf Ux)<1$ so that $F_{s-1}(\mathbf Ux)<F_r(x)$. Thus, $s$ cannot be replaced by a smaller universal index in the first
inequality. Applying the same construction to the pair
$(\mathbf Ux,\mathbf U^*)$ shows optimality of the second inequality.
\end{proof}
%%%--------------------------------------------------------------

Theorem~\ref{th-Sch-Prof} reveals that Schmidt rank does not fully determine the constraints imposed by low-OSR unitaries. In particular, the next example shows that two states with the same Schmidt rank can still behave differently under low-OSR unitary transformations

\begin{example}
Assume $d\ge 2$, and choose orthonormal families $e_1,\ldots,e_d\in\mathbb C^m$ and $f_1,\ldots,f_d\in\mathbb C^n$. For $0<\varepsilon<1/d$, set
$$
x_\varepsilon=\sqrt{1-(d-1)\varepsilon}\,e_1\otimes f_1
+\sqrt{\varepsilon}\sum_{j=2}^d e_j\otimes f_j\quad\text{and}\quad y
=\frac{1}{\sqrt d}\sum_{j=1}^d e_j\otimes f_j.
$$
Both vectors have Schmidt rank $d$, so Schmidt rank alone gives no
nontrivial obstruction to a unitary conversion $x_\varepsilon\mapsto y$.

Suppose that $\mathbf Ux_\varepsilon=y$, and set $k:=\OSR(\mathbf U)$. Since
$$
F_1(x_\varepsilon)=1-(d-1)\varepsilon \quad\text{and}\quad
F_s(y)=\frac{s}{d},
$$
applying Theorem~\ref{th-Sch-Prof} with $r=1$ gives
$$
\frac{\min\{k,d\}}{d}=F_{\min\{k,d\}}(y)
\ge F_1(x_\varepsilon)=1-(d-1)\varepsilon.
$$
Thus, $\OSR(\mathbf U)\ge \left\lceil d\bigl(1-(d-1)\varepsilon\bigr)\right\rceil$. In particular, if
$0<\varepsilon<\frac{1}{d(d-1)}$, then $\OSR(\mathbf U)\ge d$.

Thus, even when two vectors have the same maximal Schmidt rank, an exact conversion between the two specific vectors may require a unitary with large operator Schmidt rank. The obstruction comes from the Schmidt spectrum itself, not merely from the number of nonzero Schmidt coefficients.
\end{example}

%%%%------------------------------------------------------------------------------
\section{Applications}
In this section, we present several applications. We first utilize the Schmidt-profile bounds to obtain sharp Schatten norms bounds.

 For a linear map
$\Phi:\mathbb M_{m,n}\to\mathbb M_{m,n}$, we use the notation
$$
\|\Phi\|_{(p,p)}:=\sup_{X\ne0}\frac{\|\Phi(X)\|_p}{\|X\|_p}.
$$
For $\mathbf U\in\mathbb M_m\otimes\mathbb M_n$, let
$\Phi_{\mathbf U}:\mathbb M_{m,n}\to\mathbb M_{m,n}$ be the linear map
determined by
\begin{align}\label{eq-phiu}
\Phi_{\mathbf U}(X):=\ve^{-1}\!\bigl(\mathbf U\,\ve(X)\bigr).
\end{align}
Thus, $\ve\circ\Phi_{\mathbf U}=\mathbf U\circ\ve$.

Theorem~\ref{th-Sch-Prof} gives the following sharp Schatten-norm bounds for $\Phi_{\mathbf U}$.

\begin{theorem}\label{th-Sch-Norm}
Let $\mathbf U\in\mathbb M_m\otimes\mathbb M_n$ be unitary and set
$$
\ell:=\min\{\OSR(\mathbf U),d\}.
$$
Then, for every $1\le p\le\infty$ and $X\in\mathbb M_{m,n}$,
\begin{align}\label{eq-ced}
\ell^{-\left|\frac1p-\frac12\right|}\|X\|_p
\le \|\Phi_{\mathbf U}(X)\|_p
\le \ell^{\left|\frac1p-\frac12\right|}\|X\|_p.
\end{align}
Moreover, the bound is sharp: for every integer $k\ge1$,
\begin{align}\label{eq-ot}
\sup_{\substack{\mathbf U\ {\rm unitary}\\
\OSR(\mathbf U)\le k}}\|\Phi_{\mathbf U}\|_{(p,p)}
=\min\{k,d\}^{\left|\frac1p-\frac12\right|}.
\end{align}
\end{theorem}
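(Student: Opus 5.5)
The plan is to reduce both inequalities in \eqref{eq-ced} to the cumulative profile bounds of Theorem~\ref{th-Sch-Prof}, reading them as weak majorization relations between squared singular values. Fix $X\ne0$ and put $Y:=\Phi_{\mathbf U}(X)$. Since $\ve$ is an isometry and $\mathbf U$ is unitary, $\|Y\|_2=\|X\|_2$, so we may normalize to $\|X\|_2=1$. Set $x=\ve(X)$ and $y=\mathbf Ux=\ve(Y)$. Then $\lambda_j(x)=s_j(X)^2$, $\lambda_j(y)=s_j(Y)^2$, and $\|X\|_p^p=\sum_j\lambda_j(x)^{p/2}$. Apply Theorem~\ref{th-Sch-Prof} with $k:=\ell$; this is legitimate because when $\OSR(\mathbf U)\ge d$ we have $\min\{kr,d\}=d$ for every choice of $k$, and the resulting inequalities $F_d\ge F_r$ hold trivially.

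For a unit vector $z$, form the block sums
\[
w_r(z):=\sum_{j=(r-1)\ell+1}^{\min\{r\ell,d\}}\lambda_j(z),\qquad r\ge1 .
\]
These are decreasing in $r$, add up to $1$, and satisfy $\sum_{i\le r}w_i(z)=F_{\min\{\ell r,d\}}(z)$. Theorem~\ref{th-Sch-Prof} then says exactly that $\lambda(y)\prec w(x)$ and $\lambda(x)\prec w(y)$, where $\prec$ is majorization with equal total sums.

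\emph{Case $2\le p<\infty$.} The map $t\mapsto t^{p/2}$ is convex. By Karamata's inequality, $\sum_j\lambda_j(y)^{p/2}\le\sum_r w_r(x)^{p/2}$. Jensen's inequality on each block of at most $\ell$ terms gives $w_r(x)^{p/2}\le \ell^{p/2-1}\sum_{j\in\text{block }r}\lambda_j(x)^{p/2}$. Combining these yields $\|Y\|_p^p\le\ell^{p/2-1}\|X\|_p^p$, which is the upper bound in \eqref{eq-ced}. Running the same argument with $\mathbf U^*$ (which also has operator Schmidt rank $\OSR(\mathbf U)$ and satisfies $\Phi_{\mathbf U^*}=\Phi_{\mathbf U}^{-1}$) gives the lower bound.

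\emph{Case $1\le p\le2$.} Now $t\mapsto t^{p/2}$ is concave, so we use the other relation $\lambda(x)\prec w(y)$. Karamata's inequality gives $\sum_r w_r(y)^{p/2}\le\sum_j\lambda_j(x)^{p/2}$. Concavity on each block gives $w_r(y)^{p/2}\ge\ell^{p/2-1}\sum_{j\in\text{block }r}\lambda_j(y)^{p/2}$. Hence $\ell^{p/2-1}\|Y\|_p^p\le\|X\|_p^p$, that is, $\|Y\|_p\le\ell^{1/p-1/2}\|X\|_p$. The lower bound again follows by symmetry with $\mathbf U^*$.

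The case $p=\infty$ follows by letting $p\to\infty$. It can also be checked directly: $s_1(Y)^2=F_1(y)\le F_\ell(x)\le\ell\,s_1(X)^2$.

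\emph{Sharpness.} The bound $\ell\le\min\{k,d\}$ gives "$\le$" in \eqref{eq-ot}. For equality, set $\ell=\min\{k,d\}$ and take $v$ with $\SR(v)=1$. The Case~2 construction in the proof of Theorem~\ref{th-sch-reach}, with $p=1$ and $q=\ell$, has $t=\ell\le k$, $\nu_1=\ell$ and all $\beta_{r,1}=\alpha_1/\sqrt{\ell}$. It therefore produces a unitary $\mathbf U$ with $\OSR(\mathbf U)\le k$ such that $\mathbf Uv$ has $\ell$ equal Schmidt coefficients. For $X=\ve^{-1}(v)$ this gives $\|\Phi_{\mathbf U}(X)\|_p/\|X\|_p=\ell^{1/p-1/2}$, which settles the case $p\le2$. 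For $p\ge2$, use $\mathbf U^*$ applied to $\mathbf Uv$: it sends a flat rank-$\ell$ matrix to a rank-one matrix, giving the ratio $\ell^{1/2-1/p}$.

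The main obstacle is choosing the correct one of the two profile inequalities for each range of $p$, so that the majorization and the block Jensen step point in compatible directions.
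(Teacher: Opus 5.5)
Your proof is correct, but it reaches \eqref{eq-ced} by a different route from the paper's.

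The paper uses only the case $r=1$ of Theorem~\ref{th-Sch-Prof}, which gives $\|\Phi_{\mathbf U}\|_{(\infty,\infty)}\le\ell^{1/2}$. It then gets $\|\Phi_{\mathbf U}\|_{(1,1)}\le\ell^{1/2}$ by duality from $\Phi_{\mathbf U}^*=\Phi_{\mathbf U^*}$, uses that $\Phi_{\mathbf U}$ is isometric for $\|\cdot\|_2$, and fills in all intermediate $p$ by Riesz--Thorin interpolation for Schatten classes. The lower bound comes from $\mathbf U^*$.

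You instead use the entire cumulative profile, read as the majorizations $\lambda(\mathbf Ux)\prec\mathcal B_\ell(\lambda(x))$ and $\lambda(x)\prec\mathcal B_\ell(\lambda(\mathbf Ux))$. Your $w(z)$ is exactly $\mathcal B_\ell(\lambda(z))$. You then apply Karamata and a blockwise Jensen estimate. This is precisely the mechanism the paper saves for Lemma~\ref{lm-comp-maj} and Theorem~\ref{th-Ren-OSR}.

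What each route buys:
\begin{itemize}
\item The paper's argument is shorter and needs only the weakest profile inequality, but it relies on noncommutative interpolation.
\item Yours is elementary and handles each $p$ directly. It also goes through unchanged for the quasi-norms $0<p<1$, where Riesz--Thorin is not available.
\item Because $\|X\|_p^p=\sum_j\lambda_j(x)^{p/2}$ when $\|X\|_2=1$, your argument shows that, for $p\neq2$, the two-sided Schatten bound with exponent $p$ is equivalent to the R\'enyi bound of Theorem~\ref{th-Ren-OSR} with $\alpha=p/2$.
\end{itemize}
Your sharpness example is the Case~2 construction of Theorem~\ref{th-sch-reach} with initial Schmidt rank $1$ and target $\ell$. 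It is essentially the paper's controlled unitary $\mathbf V$.

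One cosmetic point: your justification for using $k=\ell$ when $\OSR(\mathbf U)>d$ is loosely worded, since ``for every choice of $k$'' is not literally true. It is cleaner to apply Theorem~\ref{th-Sch-Prof} with $k=\OSR(\mathbf U)$ and note that $\min\{\OSR(\mathbf U)\,r,d\}=\min\{\ell r,d\}$.
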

\begin{proof}
Let $X\neq0$ and $Y:=\Phi_{\mathbf U}(X)$. Put $c=\|X\|_2$.
Since vectorization is an isometry, it follows from \eqref{eq-phiu} that
$\|Y\|_2=\|X\|_2=c$. Define
$$
x:=\frac{\ve(X)}{c}
\quad\text{and}\quad y:=\frac{\ve(Y)}{c}
=\mathbf Ux.
$$
The squared Schmidt coefficients of $x$ and $y$ are
$s_j(X)^2/c^2$ and $s_j(Y)^2/c^2$, respectively. Applying
Theorem~\ref{th-Sch-Prof} with $r=1$ gives
\begin{align}\label{eq-f1}
F_1(y)\le F_\ell(x).
\end{align}
 Moreover, since the squared Schmidt coefficients are nonincreasing,
$$
\lambda_j(x)\le \lambda_1(x)=F_1(x), \qquad 1\le j\le \ell.
$$
Therefore,
\begin{align}\label{eq-fel}
F_\ell(x) =\sum_{j=1}^{\ell}\lambda_j(x)
\le \ell\,\lambda_1(x)=\ell F_1(x).
\end{align}
From \eqref{eq-f1} and \eqref{eq-fel} we get $\lambda_1(y)\le\ell\,\lambda_1(x)$. Returning to singular values gives
$$
\frac{\|Y\|_\infty^2}{c^2}\le\ell\frac{\|X\|_\infty^2}{c^2}
\qquad\text{or}\qquad\|\Phi_{\mathbf U}(X)\|_\infty
\le \ell^{1/2}\|X\|_\infty,
$$
whence
$$
\|\Phi_{\mathbf U}\|_{(\infty,\infty)}\le \ell^{1/2}.
$$

With respect to the Hilbert--Schmidt inner product, the map
$\ve$ is unitary and
\begin{align}\label{eq-phiu-adj}
\Phi_{\mathbf U}^*=\Phi_{\mathbf U^*}.
\end{align}
Indeed, this follows from $\Phi_{\mathbf U}=\ve^{-1}\mathbf U\ve$.
Since
$\OSR(\mathbf U^*)=\OSR(\mathbf U)$,
Schatten-norm duality yields
$$
\|\Phi_{\mathbf U}\|_{(1,1)}=\|\Phi_{\mathbf U^*}\|_{(\infty,\infty)}
\le \ell^{1/2}.
$$
Moreover, from $\|X\|_2=\|Y\|_2$ we have $\|\Phi_{\mathbf U}\|_{(2,2)}=1$.

By the Riesz--Thorin interpolation theorem for Schatten norms
(see, for example, \cite[Theorems~2.9--2.10 and Remark~1, p.~23]{Simon2005}), if a linear map
$\Phi$ satisfies
$$
\|\Phi(X)\|_{p_0}\le M_0\|X\|_{p_0} \quad\text{and}\quad
\|\Phi(X)\|_{p_1}\le M_1\|X\|_{p_1},
$$
then, for
$$
\frac1p=\frac{1-\theta}{p_0}+\frac{\theta}{p_1},\qquad 0\le\theta\le1,
$$
we have
$$
\|\Phi(X)\|_p\le M_0^{1-\theta}M_1^\theta\|X\|_p.
$$

Applying this to the endpoint estimates above, for $1\le p\le2$
we interpolate between $p=1$ and $p=2$ to obtain
$$
\|\Phi_{\mathbf U}(X)\|_p \le\ell^{\frac1p-\frac12}\|X\|_p.
$$
For $2\le p\le\infty$, interpolation between $p=2$ and
$p=\infty$ gives
$$
\|\Phi_{\mathbf U}(X)\|_p \le\ell^{\frac12-\frac1p}\|X\|_p.
$$
Thus
$$
\|\Phi_{\mathbf U}(X)\|_p\le\ell^{\left|\frac1p-\frac12\right|}\|X\|_p,
\qquad 1\le p\le\infty.
$$

Since $\Phi_{\mathbf U}^{-1}=\Phi_{\mathbf U^*}$ and $\mathbf U^*$ has the same operator Schmidt rank as $\mathbf U$, applying the preceding estimate to $\Phi_{\mathbf U^*}$ gives
$$
\|X\|_p\le\ell^{\left|\frac1p-\frac12\right|}\|\Phi_{\mathbf U}(X)\|_p,
$$
which yields \eqref{eq-ced}.

We now prove sharpness. Fix $k\ge1$ and set $t:=\min\{k,d\}$. Choose orthonormal families $e_1,\ldots,e_t\in\mathbb C^m$ and $f_1,\ldots,f_t\in\mathbb C^n$ and let
$$
a:=\frac1{\sqrt t}\sum_{j=1}^t e_j.
$$
Set
$$
P_1:=I_m-\sum_{j=2}^t e_je_j^*\quad\text{and}\quad P_j:=e_je_j^* \quad (2\le j\le t),
$$
and choose unitaries $V_j\in\mathbb M_n$ satisfying
$V_jf_1=f_j$ for $1\le j\le t$. Then $\mathbf V:=\sum_{j=1}^t P_j\otimes V_j$ is unitary, since the projections $P_1,\ldots,P_t$ are mutually
orthogonal and sum to $I_m$. Moreover, $\OSR(\mathbf V)\le t$, and
$$
\mathbf V(a\otimes f_1)=\frac1{\sqrt t}\sum_{j=1}^t e_j\otimes f_j.
$$
Let $X_0,Y_0\in\mathbb M_{m,n}$ be determined by
$$
\ve(X_0)=a\otimes f_1
\quad\text{and}\quad\ve(Y_0)
=\frac1{\sqrt t}\sum_{j=1}^t e_j\otimes f_j.
$$
Since $\ve(X_0)$ is a unit product vector,
it has a single nonzero singular value equal to $1$, and hence $\|X_0\|_p=1$. Moreover,
$\ve(Y_0)$ is already a Schmidt decomposition, so $Y_0$ has exactly $t$ nonzero
singular values, all equal to $t^{-1/2}$. Therefore,
$$
\|Y_0\|_p=\left(t\cdot t^{-p/2}\right)^{1/p}
=t^{\frac1p-\frac12}.
$$
Also, we have $Y_0=\Phi_{\mathbf V}(X_0)$. Thus, for
$1\le p\le2$ we have $\|\Phi_{\mathbf V}\|_{(p,p)}
\ge t^{\frac1p-\frac12}$, whereas for $2\le p\le\infty$ we have $\|\Phi_{\mathbf V^*}\|_{(p,p)}
\ge t^{\frac12-\frac1p}$.
Since both $\mathbf V$ and $\mathbf V^*$ have operator Schmidt rank
at most $t\le k$, we obtain
$$
\sup_{\substack{\mathbf U\ {\rm unitary}\\
\OSR(\mathbf U)\le k}}\|\Phi_{\mathbf U}\|_{(p,p)}
\ge t^{\left|\frac1p-\frac12\right|}.
$$
This together with \eqref{eq-ced} gives \eqref{eq-ot}.
\end{proof}

%%-----------------------------------------------------------------------------------------------

The Schatten-norm estimate uses only the case $r=1$ of Theorem~\ref{th-Sch-Prof}. The full family of inequalities also gives information about entropic quantities.

 For $0<\alpha<\infty$, $\alpha\neq1$, the \emph{R\'enyi entropy} of a unit vector $x\in\mathbb C^m\otimes\mathbb C^n$ is defined by
$$
H_\alpha(x):=\frac{1}{1-\alpha}\log\left(\sum_{j=1}^d\lambda_j(x)^\alpha\right),
$$
with the usual continuous extensions
$$
H_0(x)=\log\SR(x),\quad H_1(x)=-\sum_{j=1}^d\lambda_j(x)\log\lambda_j(x),\quad \mbox{and} \quad
H_\infty(x)=-\log\lambda_1(x);
$$
see \cite[Chapter 7]{BKMS} for more information on quantum entropies and quantum relative entropies as well as \cite{SEO1, SEO2}. Thus, $H_1$ is the usual entropy of entanglement of a bipartite pure
state, while $H_0$ and $H_\infty$ measure, respectively, the size
of its Schmidt support and its largest Schmidt weight.

 For $\alpha=0$, the bound follows from the multiplicative control of Schmidt rank. For $\alpha=1$, related entanglement-generation quantities for bipartite unitaries were investigated in \cite{BenHarLeSm2003,Nielsen2003}. In what follows, we show that for every $\alpha\in[0,\infty]$,
$$
\bigl|H_\alpha(\mathbf Ux)-H_\alpha(x)\bigr| \le \log \ell, \qquad \ell=\min\{\OSR(\mathbf U),d\}.
$$

For a probability vector $b=(b_1,\ldots,b_d),\,\, b_1\ge\cdots\ge b_d\ge0$,
and an integer $1 \leq \ell \leq d$, define its $\ell$-block aggregating by
$\mathcal B_\ell(b):=(c_1,\ldots,c_d)$, where
$$
c_r :=
\begin{cases}
	\displaystyle \sum_{j=(r-1)\ell+1}^{\min\{r\ell,d\}} b_j, &\quad 1 \le r \le \left\lceil \frac{d}{\ell} \right\rceil, \\
	0, &\quad \left\lceil \frac{d}{\ell} \right\rceil < r \leq d .
\end{cases}
$$
Thus, $\mathcal B_\ell(b)$ is obtained by summing consecutive blocks of
$\ell$ entries of $b$, with the final block truncated if necessary. This is a
particular contiguous aggregation in the sense of
\cite{CicVac2018}.
Since $b_1\ge b_2\ge\cdots\ge b_d$, the block sums also form a
nonincreasing sequence. Hence, $\mathcal B_\ell(b)$ is again a probability vector written in
decreasing order.

%%%%-------------------------------------------------------------------------
Recall that  for probability vectors
$a,b\in\mathbb R^d$ written in decreasing order,  $a\succ b$
means
$$
\sum_{j=1}^r a_j\ge \sum_{j=1}^r b_j,
\qquad 1\le r<d.
$$
\begin{lemma}\label{lm-comp-maj}
Let $x,y$ be unit vectors and let $1\le \ell\le d$. Suppose that
$$
F_{\min\{\ell r,d\}}(y)\ge F_r(x),\qquad
1\le r\le \left\lceil\frac d\ell\right\rceil.
$$
Then $\mathcal B_\ell(\lambda(y))\succ \lambda(x)$.
\end{lemma}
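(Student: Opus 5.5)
Write $c:=\mathcal B_\ell(\lambda(y))$ and $N:=\lceil d/\ell\rceil$. We must show $\sum_{j=1}^r c_j\ge F_r(x)$ for every $1\le r<d$. Both $c$ and $\lambda(x)$ are probability vectors in decreasing order (this was noted after the definition of $\mathcal B_\ell$), so the partial sums are the only thing to check.

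First compute the partial sums of $c$. Because the blocks are consecutive, for $1\le r\le N$ we have
$$
\sum_{j=1}^r c_j=\sum_{j=1}^{\min\{r\ell,d\}}\lambda_j(y)=F_{\min\{\ell r,d\}}(y).
$$
The hypothesis therefore gives $\sum_{j=1}^r c_j\ge F_r(x)$ directly for all $1\le r\le N$.

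Now take $N< r<d$. Here $c_j=0$ for $j>N$, so $\sum_{j=1}^r c_j=\sum_{j=1}^N c_j=F_{\min\{N\ell,d\}}(y)$. Since $N\ell\ge d$, this equals $F_d(y)=1$. On the other side, $F_r(x)\le 1$ because $x$ is a unit vector. So the inequality $\sum_{j=1}^r c_j\ge F_r(x)$ holds trivially in this range.

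Combining the two ranges gives $\mathcal B_\ell(\lambda(y))\succ\lambda(x)$. The only real point is the telescoping identity for the block partial sums, including the truncated last block where $\min\{r\ell,d\}=d$. Everything else follows from the hypothesis and the normalization $F_d=1$, so I expect no real obstacle.
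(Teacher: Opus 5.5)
Your proposal is correct and follows the same argument as the paper. You identify the block partial sums $\sum_{j=1}^r c_j=F_{\min\{\ell r,d\}}(y)$, apply the hypothesis for $r\le\lceil d/\ell\rceil$, and note that for larger $r$ the partial sums of $\mathcal B_\ell(\lambda(y))$ already equal $1\ge F_r(x)$.
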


\begin{proof}
Set $c:=\mathcal B_\ell(\lambda(y))$. By the definition of the block aggregating, each entry $c_j$ collects
$\ell$ consecutive components of $\lambda(y)$, except possibly the
last block, which may contain fewer than $\ell$ components. Therefore,
for every $r\ge1$,
$$
\sum_{j=1}^{r}c_j=\sum_{i=1}^{\min\{\ell r,d\}}\lambda_i(y).
$$

Note that the first $r$ blocks of
$\mathcal B_\ell(\lambda(y))$ contain precisely the first
$\min\{\ell r,d\}$ entries of $\lambda(y)$. Hence,
$$
\sum_{j=1}^r c_j= \sum_{j=1}^r \bigl(\mathcal B_\ell(\lambda(y))\bigr)_j
=\sum_{j=1}^{\min\{\ell r,d\}} \lambda_j(y)=
F_{\min\{\ell r,d\}}(y)\ge F_r(x)=\sum_{j=1}^r\lambda_j(x).
$$
For $r>\lceil d/\ell\rceil$, all nonzero entries of $c$ have already
been included, so
$$
\sum_{j=1}^r c_j=1\ge \sum_{j=1}^r\lambda_j(x).
$$
For $r=d$ the two total sums are both equal to $1$. Hence
$c\succ\lambda(x)$, as required.
\end{proof}
%%---------------------------------------------------------------

The majorization relation above allows us to translate the spectral
constraints into bounds on entropic quantities. We use it to obtain sharp
bounds on the change of R\'{e}nyi entropies under unitaries with bounded
operator Schmidt rank.
\begin{theorem}\label{th-Ren-OSR}
Let $\mathbf U$ be a bipartite unitary on
$\mathbb C^m\otimes\mathbb C^n$, and let
$$
\ell:=\min\{\OSR(\mathbf U),d\}.
$$
Then, for every unit vector
$x\in\mathbb C^m\otimes\mathbb C^n$ and every
$\alpha\in[0,\infty]$,
$$
\bigl|H_\alpha(\mathbf Ux)-H_\alpha(x)\bigr|\le\log \ell.
$$
Moreover, this bound is sharp. For every integer
$1\le k\le d$,
$$
\sup_{\substack{\mathbf U\ \mathrm{unitary}\\
\OSR(\mathbf U)\le k}}\ \sup_{\|x\|=1}
\bigl|H_\alpha(\mathbf Ux)-H_\alpha(x)\bigr|
=\log k,
$$
for every $\alpha\in[0,\infty]$.
\end{theorem}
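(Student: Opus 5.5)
We combine Theorem~\ref{th-Sch-Prof}, Lemma~\ref{lm-comp-maj} and Schur-concavity of R\'enyi entropies. Set $y:=\mathbf Ux$. By Theorem~\ref{th-Sch-Prof}, applied with $k=\OSR(\mathbf U)$, we have $F_{\min\{kr,d\}}(y)\ge F_r(x)$ for all $r$. Since $F_{\min\{kr,d\}}=F_{\min\{\ell r,d\}}$ when $\ell=\min\{k,d\}$ (if $k\ge d$ both indices equal $d$), the hypothesis of Lemma~\ref{lm-comp-maj} holds. Hence $\mathcal B_\ell(\lambda(y))\succ\lambda(x)$. Each $H_\alpha$, $\alpha\in[0,\infty]$, is Schur-concave on probability vectors, so
\[
H_\alpha(\lambda(x))\ge H_\alpha(\mathcal B_\ell(\lambda(y))).
\]
(For $\alpha=0$ this is the support-size statement, and for $\alpha=\infty$ it is $c_1\ge\lambda_1(x)$.)

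It remains to compare $H_\alpha(\mathcal B_\ell(b))$ with $H_\alpha(b)$, and we aim to show $H_\alpha(b)\le H_\alpha(\mathcal B_\ell(b))+\log\ell$. The idea is that $b$ is obtained from $c=\mathcal B_\ell(b)$ by splitting each $c_r$ into at most $\ell$ pieces, and the uniform splitting $\tilde b$, with entries $c_r/\ell$ repeated $\ell$ times, majorizes nothing useful directly. A cleaner route is to split by cases on $\alpha$.
\begin{itemize}
\item For $\alpha>1$, use $b_j^\alpha$ together with $\sum_j b_j^\alpha\ge \sum_r \ell\,(c_r/\ell)^\alpha$, which follows from convexity of $t\mapsto t^\alpha$ within each block (a block has at most $\ell$ entries, and padding with zeros only helps). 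This gives $\sum b_j^\alpha\ge \ell^{1-\alpha}\sum c_r^\alpha$, and dividing the logarithm by $1-\alpha<0$ yields the desired bound.
\item For $0<\alpha<1$, concavity gives the reverse inequality $\sum_j b_j^\alpha\le \ell^{1-\alpha}\sum c_r^\alpha$ (the block has at most $\ell$ terms), which yields the same conclusion.
\item For $\alpha=1$, use the grouping rule $H(b)=H(c)+\sum_r c_r H(b|_{\text{block }r}/c_r)\le H(c)+\log\ell$.
\item For $\alpha=0$ and $\alpha=\infty$, take limits, or check directly: $\SR$ grows by at most a factor $\ell$, and $c_1\le \ell b_1$.
\end{itemize}
Chaining these gives $H_\alpha(y)\ge H_\alpha(x)-\log\ell$. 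Applying the same argument to $\mathbf U^*$, which has the same operator Schmidt rank, with the roles of $x$ and $y$ swapped, gives the other direction.

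For sharpness, take $x=a\otimes f_1$ and the controlled unitary $\mathbf V$ from the proof of Theorem~\ref{th-Sch-Norm} with $t=k\le d$. Then $\mathbf Vx$ is maximally entangled of Schmidt rank $k$, so $H_\alpha(\mathbf Vx)=\log k$ and $H_\alpha(x)=0$ for every $\alpha$. Combined with the upper bound, in which $\ell\le k$, this gives equality of the supremum.

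The main obstacle is the block-splitting inequality $H_\alpha(b)\le H_\alpha(\mathcal B_\ell(b))+\log\ell$ uniformly in $\alpha$, in particular handling the sign flip of $1/(1-\alpha)$ across $\alpha=1$ and the endpoint cases. We expect the power-sum convexity and concavity argument above to settle it.
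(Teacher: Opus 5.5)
Your proposal is correct and follows essentially the paper's route: the profile inequality of Theorem~\ref{th-Sch-Prof} feeds Lemma~\ref{lm-comp-maj}, blockwise Jensen on power sums handles $\alpha\neq1$, the adjoint $\mathbf U^*$ gives the reverse direction, and the controlled unitary $\mathbf V$ gives sharpness; the only differences are cosmetic (you treat $\alpha=1$ by the grouping rule instead of a limit, and $\alpha\in\{0,\infty\}$ through the same majorization chain rather than directly). One slip in the write-up: your chain $H_\alpha(\lambda(\mathbf Ux))\le H_\alpha(\mathcal B_\ell(\lambda(\mathbf Ux)))+\log\ell\le H_\alpha(x)+\log\ell$ gives the \emph{upper} bound $H_\alpha(\mathbf Ux)\le H_\alpha(x)+\log\ell$, not $H_\alpha(\mathbf Ux)\ge H_\alpha(x)-\log\ell$ as you state, and the lower bound is what the $\mathbf U^*$ step supplies, so the conclusion is unaffected (incidentally, $\lambda(\mathbf Ux)$, padded with zeros, does majorize your uniform splitting $\tilde b$, which would give the block inequality for all $\alpha$ at once).
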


\begin{proof}
 Set $\ell:=\min\{\OSR(\mathbf U),d\}$.
Since $\OSR(\mathbf U)\ge\ell$ and every Schmidt rank is at most $d$,
Theorem~\ref{th-Sch-Prof} gives, for every $1\le r\le d$,
\begin{equation*}
	F_{\min\{\ell r,d\}}(\mathbf Ux)
	\ge F_r(x),
\end{equation*}
where we used $\min\{\ell r,d\}=\min\{\OSR(\mathbf U)r,d\}$. Applying Lemma~\ref{lm-comp-maj} with $y=\mathbf Ux$ we obtain
$$
\mathcal B_\ell(\lambda(\mathbf Ux))\succ\lambda(x).
$$
Set $b:=\lambda(\mathbf Ux)$ and $c:=\mathcal B_\ell(b)$ and let $B_1,\ldots,B_N$ denote the consecutive blocks used in the definition of the block aggregation. By definition, we have
$$
c_r=\sum_{i\in B_r}b_i,\qquad r=1,\ldots,N,
$$
and each block satisfies $|B_r|\le \ell$. Since $0<\alpha<1$, the function $t^\alpha$ is concave and the Jensen's inequality gives for each block $B_r$ that
 $$
\sum_{i\in B_r}b_i^\alpha\le|B_r|^{1-\alpha}
\left(\sum_{i\in B_r}b_i\right)^\alpha \le \ell^{1-\alpha}c_r^\alpha.
$$
Summing over all blocks gives
\begin{align}\label{eq-Bux}
\sum_{j=1}^d b_j^\alpha
=\sum_{r=1}^N\sum_{i\in B_r}b_i^\alpha
\le\ell^{1-\alpha}\sum_{r=1}^N c_r^\alpha
=\ell^{1-\alpha}\sum_{j=1}^d c_j^\alpha.
\end{align}
On the other hand, since $c=\mathcal B_\ell(b)\succ\lambda(x)$, the majorization inequality for concave functions (see \cite{BhMatAnal}) gives
\begin{align}\label{eq-Bux2}
\sum_{j=1}^d c_j^\alpha
\le\sum_{j=1}^d\lambda_j(x)^\alpha.
\end{align}
Combining \eqref{eq-Bux} and \eqref{eq-Bux2} we obtain
$$
\sum_{j=1}^d b_j^\alpha\le
\ell^{1-\alpha} \sum_{j=1}^d\lambda_j(x)^\alpha.
$$

Taking logarithms and dividing by $1-\alpha>0$ yields
\begin{align}\label{eq-ent1}
H_\alpha(\mathbf Ux)\le H_\alpha(x)+\log\ell.
\end{align}
Now suppose that $\alpha>1$. Since $t^\alpha$ is convex, Jensen's
inequality gives, for each block $B_r$,
$$
\sum_{i\in B_r}b_i^\alpha\ge|B_r|^{1-\alpha}
\left(\sum_{i\in B_r}b_i\right)^\alpha \ge \ell^{1-\alpha}c_r^\alpha,
$$
where the right inequality follows from $|B_r|\le\ell$ and $1-\alpha<0$.
Summing over all blocks yields
\begin{align}\label{eq-Bix}
\sum_{j=1}^d b_j^\alpha
=\sum_{r=1}^N\sum_{i\in B_r}b_i^\alpha
\ge\ell^{1-\alpha}\sum_{r=1}^N c_r^\alpha
=\ell^{1-\alpha}\sum_{j=1}^d c_j^\alpha.
\end{align}
Moreover, since $c=\mathcal B_\ell(b)\succ\lambda(x)$ and $t^\alpha$ is convex, the majorization inequality gives
\begin{align}\label{eq-Bix2}
\sum_{j=1}^d c_j^\alpha\ge\sum_{j=1}^d\lambda_j(x)^\alpha .
\end{align}
Combining \eqref{eq-Bix} and \eqref{eq-Bix2} we obtain
$$
\sum_{j=1}^d b_j^\alpha
\ge \ell^{1-\alpha}\sum_{j=1}^d\lambda_j(x)^\alpha.
$$
Since $1-\alpha<0$, taking logarithms and dividing by
$1-\alpha$ reverses the inequality. Hence,
$$
H_\alpha(\mathbf Ux)\le H_\alpha(x)+\log\ell.
$$
Applying the same argument to $\mathbf U^*$, and using $\OSR(\mathbf U^*)
=\OSR(\mathbf U)$ gives
\begin{align}\label{eq-ent2}
H_\alpha(x)\le H_\alpha(\mathbf Ux)+\log\ell.
\end{align}
Combining \eqref{eq-ent1} and \eqref{eq-ent2} yields
$$
\bigl|H_\alpha(\mathbf Ux)-H_\alpha(x)\bigr|
\le\log\ell,\qquad0<\alpha<\infty,\ \alpha\neq1.
$$
For $\alpha=0$, \eqref{eq-SR-OSR-act} gives
\begin{equation*}
	\SR(\mathbf Ux)
	\le \OSR(\mathbf U)\SR(x)
\end{equation*}
and, by applying the same inequality to $\mathbf U^*$,
\begin{equation*}
	\SR(x)
	\le \OSR(\mathbf U)\SR(\mathbf Ux).
\end{equation*}
Since $\SR(x),\SR(\mathbf Ux)\le d$, these inequalities imply
\begin{equation*}
	\frac1\ell
	\le
	\frac{\SR(\mathbf Ux)}{\SR(x)}
	\le
	\ell.
\end{equation*}
Consequently,
\begin{equation*}
	\bigl|H_0(\mathbf Ux)-H_0(x)\bigr|
	=
	\left|
	\log\frac{\SR(\mathbf Ux)}{\SR(x)}
	\right|
	\le\log\ell.
\end{equation*}

For $\alpha=1$, the asserted inequality follows by letting
$\alpha\to1$ in the already established inequality for
$0<\alpha<\infty$, $\alpha\ne1$, because the Rényi entropies converge
to the Shannon entropy in finite dimension. Finally, for
$\alpha=\infty$, the profile inequality with $r=1$ gives
$$
\lambda_1(\mathbf Ux)=F_1(\mathbf Ux)
 \le F_\ell(x)\le \ell F_1(x)=\ell\lambda_1(x),
$$
and the same inequality applied to $\mathbf U^*$ gives
$$
\lambda_1(x)\le \ell\lambda_1(\mathbf Ux).
$$
Taking negative logarithms yields
$$
\bigl|H_\infty(\mathbf Ux)-H_\infty(x)\bigr|\le\log\ell.
$$
Thus, the same bound holds for all $\alpha\in[0,\infty]$.

For sharpness, fix $1\le k\le d$. Use the unitary $\mathbf V$ constructed
in the proof of Theorem~\ref{th-Sch-Norm}, with
$\OSR(\mathbf V)\le k$, and let $x_0$ be the product vector satisfying
\begin{equation*}
	\mathbf Vx_0
	=
	\frac1{\sqrt k}\sum_{j=1}^{k}e_j\otimes f_j.
\end{equation*}
The squared Schmidt coefficients of $x_0$ are $(1,0,\ldots,0)$, whereas those of $\mathbf Vx_0$ are
$\left(\frac1k,\ldots,\frac1k,0,\ldots,0\right)$, with exactly $k$ nonzero entries. Hence, for every
$\alpha\in[0,\infty]$,
\begin{equation*}
	H_\alpha(x_0)=0
	\quad \mbox{and} \quad
	H_\alpha(\mathbf Vx_0)=\log k.
\end{equation*}
Thus
\begin{equation*}
	\sup_{\substack{\mathbf U\ \mathrm{unitary}\\
			\OSR(\mathbf U)\le k}}
	\ \sup_{\|x\|=1}
	\bigl|H_\alpha(\mathbf Ux)-H_\alpha(x)\bigr|
	\ge\log k.
\end{equation*}
The reverse inequality follows from the already established estimate with
$\ell\le k$, and therefore equality holds.
\end{proof}
%%%%--------------------------------------------------------------------------
\begin{remark}
Let $\mathcal K_A$ and $\mathcal K_B$ be finite-dimensional ancillary
Hilbert spaces, and let $\widetilde{\mathbf U}$ denote the unitary on
$$
(\mathbb C^m\otimes\mathcal K_A)\otimes
(\mathbb C^n\otimes\mathcal K_B)
$$
obtained by letting $\mathbf U$ act on
$\mathbb C^m\otimes\mathbb C^n$ and as the identity on
$\mathcal K_A\otimes\mathcal K_B$, under the canonical reordering of
tensor factors. Then $\OSR(\widetilde{\mathbf U})=\OSR(\mathbf U)$.
Indeed, if $\mathbf U=\sum_{j=1}^r A_j\otimes B_j$ is a minimal operator Schmidt decomposition, then $\widetilde{\mathbf U}=\sum_{j=1}^r
(A_j\otimes I_{\mathcal K_A})\otimes(B_j\otimes I_{\mathcal K_B})$ and tensoring with a fixed nonzero identity preserves linear independence of the two factor families.

Consequently, Theorem~\ref{th-Ren-OSR} applies on the enlarged
bipartite space. More precisely, if
$$
d':=\min\bigl\{m\dim\mathcal K_A,\,n\dim\mathcal K_B\bigr\},
$$
then, for every unit vector $x\in
(\mathbb C^m\otimes\mathcal K_A)\otimes(\mathbb C^n\otimes\mathcal K_B)$ and every $\alpha\in[0,\infty]$, we have
$$
\bigl|H_\alpha(\widetilde{\mathbf U}x)-H_\alpha(x)\bigr|\le
\log\min\{\OSR(\mathbf U),d'\},
$$
where the R\'enyi entropies are computed with respect to the enlarged
bipartition.
\end{remark}

%==========================================================================================%
%==========================================================================================%
%==========================================================================================%

\medskip
\noindent \textit{Author Contributions Statement.} All authors wrote, edited, and reviewed the manuscript.

\medskip
\noindent \textit{Conflict of Interest Statement.} On behalf of the authors, the corresponding author states that there is no conflict of interest.

\medskip
\noindent\textit{Data Availability Statement.} Data sharing is not applicable to this article as no datasets were generated or analyzed during the current study.

\medskip
\noindent \textit{Funding Declaration.} This research received no funding.

\medskip

%==========================================================================================%
%==========================================================================================%
%==========================================================================================%

\end{document}